\newtheorem{proof}{Proof}
\newtheorem{mylemma}{Lemma}
\newtheorem{myproperty}{Property}
\begin{document}
\newcommand*{\QEDA}{\hfill\ensuremath{\blacksquare}}
\def\arrow{{\rightarrow}}
\def\E{{\mathcal{E}}}
\def\N{{\mathcal{N}}}
\def\B{{\mathcal{B}}}
\def\G{{\mathcal{G}}}
\def\I{{\mathcal{I}}}
\def\diag{{\textrm{diag}}}
\def\i {{ -i}}

\title{Greening Geographical Power Allocation for Cellular
Networks}
\author{Yanju Gu}
\maketitle

\IEEEpeerreviewmaketitle
\begin{abstract}
Harvesting energy from nature (solar, wind etc.)  is envisioned as a key enabler for realizing green wireless networks.
However, green energy sources are geographically distributed and the power amount is random which may not enough to power a base station by a single energy site.
Burning brown energy sources such as coal and crude oil, though companied with carbon dioxide emission, provides stable power.
 In this paper, without sacrificing communication quality, we investigate how to perform green energy allocation to abate the dependence on brown energy with hybrid  brown  and green energy injected in power networks.
We present a comprehensive framework
to characterize the performance of hybrid green and brown energy empowered cellular network.
Novel performance  metric ``bits/ton\ce{CO2}/Hz''  is proposed to evaluate the greenness of the communication network.
As  green energy is usually generated from distributed geographical locations and is time varying, online geographical  power allocation algorithm is proposed to maximize the greenness of communication network considering electricity transmission's physical laws i.e., Ohm's law and Kirchhoff's circuit laws.
Simulations show that
geographically distributed green energy sources complement each other by improving the communication capacity while saving brown energy consumption.
Besides, the penetration of green energy can also help reduce power loss on the transmission breaches.
\end{abstract}

%
%
\section{Introduction}
\subsection {Context and Motivation}
Telecommunications
industry consumes $2\%$ of the total electricity  generation worldwide and
base stations energy consumption taking account more than $80\%$ \cite{Bhargava}.
Energy reduction in base stations has been studied  in many ways: such as hardware design (e.g., more
energy efficient power amplifiers \cite{HuaweiBS} and
topological management (e.g., the deployment
of relays and/or micro BSs \cite{depoly,du2013network}).
Many base station equipment manufacturers have begun to offer a number of cost friendly solutions to  reduce power demands of base stations.
For example, Nokia Networks launches more than $20$ products and services on its energy efficient base stations at Mobile World Congress $2015$.

For the cellular networks with multiple base stations, traffic-driven base station switching-on/off strategy has been proposed \cite{Traffic-driven} to turn off a base station one by one that will minimally affect the network.
Similar idea has also been further  studied in \cite{on-off, clockconf} with distributed control algorithm design.
Instead of putting BSs into sleep, \cite{SoftOn-OFF} tactically reduces the coverage (and the power usage) of each BS, and strategically places microcells (relay stations) to offload the traffic transmitted to/from base stations in order to save total power consumption.

In stead of simply saving  base stations consumption, other efforts have been put in  empowering celluar networks with \emph{green energy} such as sunlight and wind energy.
In contrast,
power source  emits \ce{CO2} while burned has negative impact on environment is referred as \emph{brown energy}.

The  fossil fueled  power plants
have great
negative impacts on the environment as they
emit a large part of man-made greenhouse gas  to the atmosphere.
It has been estimated \cite{greennetwork} that cellular networks empowered by brown energy will
consume so much energy that their  \ce{CO2} emissions equivalent
will be $0.4$ percent of the global value by $2020$.
\ce{CO2} is a heat-trapping ``greenhouse" gas which represents a negative externality on the climate system.
Generation of electricity relies
heavily on brown energy.
Countries such as Australia, China, India, Poland and South Africa produce over two-thirds of their electricity and heat through the combustion of coal.
To control the use of brown energy, \ce{CO2}
has already been priced and thereby giving a financial value to each tonne of emissions saved.
A sufficiently high carbon price also promotes investment in clean, low-carbon technologies.
In France, the new carbon tax is 25 US dollars per tonne of \ce{CO2} and
In Switzerland US the price is even $34.20$ per  tonne \ce{CO2}
The total estimated income from the carbon tax would have been between $4.5$ billion euros annually.

In contrast, non-combustion energy sources such as sunlight and wind energy not convert hydrocarbons to \ce{CO2}.
Once the system is paid for, the owner of a renewable energy system will be producing their own renewable electricity for essentially no cost and can sell the excess to the local utility at a profit.
Therefore, green cellar network benefits both environment protection and economy cost.
To build a green cellular network, we need to first improve base station which dominates the  contribution factor to overall energy consumption.

However, green energy has special properties different from  traditional brown energy, and  how to
efficiently stylize green energy is challenging.
As green energy harvested is usually random, most of existing works focus on
 designing new transmission
strategies that can best take advantage of and adapt to the
random energy arrivals \cite{OzelJsac}.
Optimal packet scheduling is proposed in \cite{Scheduling} to
adaptively change the transmission rate according to the traffic load and available energy, such that the time by which all packets are delivered is minimized.

However, green energy sources are geographically
distributed, and the power amount is random and may not enough to power a base station.
For example, solar converters can deliver power only during sunshine hours.
During the night or on cloudy days, other energy sources
have to cover the power demand instead of solar power.
Besides, as large solar
panels are super expensive and take up considerable space, middle size solar powers are always adopted and geographically distributed.

\subsection{ Contributions and Organization of the Work}
Microgrid is the key enabler for deep integration of renewable energy sources.
It is  intelligent, reliable, and environmentally friendly.
In this paper, we consider the case that
base stations are empowered by microgrid.
Brown energy is a stable power source injected in the micogrid when renewable energy is not enough. We investigate how to perform green energy allocation
to abate the dependence on brown energy without sacrificing communication quality.
As geographically distributed green energy need to be delivered to base stations to meet the power demand, power flow needs to be performed
considering power transmission's physical laws constraints i.e., Ohm's law and Kirchhoff's circuit laws.

To solve above problems, in this paper we have made the following major contributions.
\begin{itemize}
\item
In this paper, we systematically study the hybrid energy sources (brown and green energy) powered cellular network.
To evaluate the  greenness of a cellular network, we define the ratio of spectrum efficiency over the total \ce{CO2} consumed (unit:bits/ton\ce{CO2}/Hz) as evaluation metric, and the tradeoff between brown energy consumption and information data rate is established.
\item
The problem of optimal green power allocation to different base stations is modeled to minimize the consumption of brown energy needed
As the power allocation is achieved via power flow over microgrid, power networks physical limits is considered.
More importantly, electricity transmission's physical laws constraints i.e., Ohm's law and Kirchhoff's circuit laws are considered.
\item
Green energy is  time varying in nature. 
Besides, due to measurement error, the exact amount of available green energy is hard to know.
Stochastic online green energy power allocation algorithm is proposed and it is analytically shown that the expected  brown energy consumed converge to the minimum brown energy needed.
\end{itemize}

The rest of the paper is organized as follows.
The network topology of cellular networks empowered by  microgrid with geographically distributed green energy is shown in Section~\ref{Example}.
The brown energy consumption minimization problem is modelled and formulated in Section~\ref{model}.
The online stochastic power allocation algorithm is presented in
Section~\ref{algorithm}.
Simulation results of online power allocation for real power networks are illustrated in Section~\ref{sec-simu}.
Finally, conclusions are given in Section~\ref{conclusion}.
Notations throughout this paper are introduced in in Table \ref{symbol}.


\begin{figure}[b]
  \centering
{\epsfig{figure=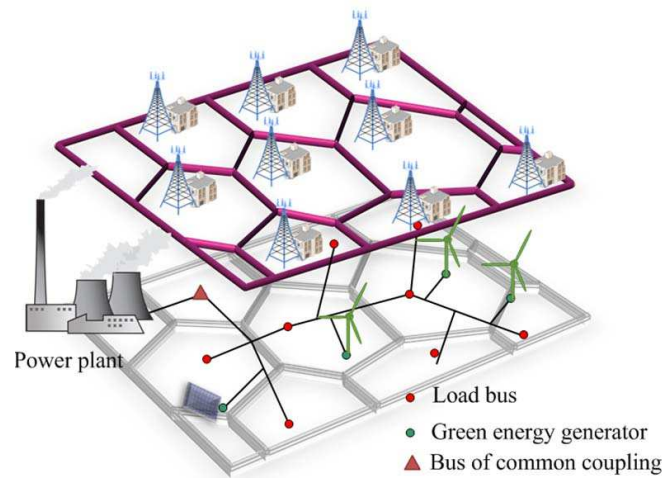, width=3.4in}}
\caption{A green communication networks in the proposed scheme that integrate multiple base
stations powered by  distributed green energy. Brown energy from main grid
is supplied to the communication networks via a bus of common coupling.}
\label{Fignetwork}
\end{figure}

%
\section{System Model and Green Communication Metric}\label{Example}
\begin{figure}[b]
  \centering
{\epsfig{figure=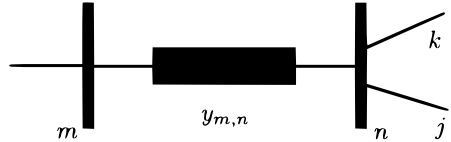,width=2.3in}}
\caption{Equivalent  transmission branch between bus $m$ and bus $n$ with admittance $y_{m,n}$.
$k$ and $j$ are the other two buses connected to bus $n$. }
\label{line}
\end{figure}

\begin{table}
\center
\caption{Notations}
\label{Tab1}
\begin{tabular}{cl|l}
\hline
\rowcolor[HTML]{EFEFEF}
& Symbol & Description\\ \hline
&$N$
& Total number of buses. \\\hline
&$\mathcal{B}$,
& Set of buses connected to base stations. \\\hline
&$\mathcal{G}$,
& Set of buses connected to green energy sources.  \\\hline
& $C_n$
& Communication capacity of base station $n$. \\\hline
& $g_{n}$
& Harvested energy at bus $n$.  \\\hline
& $p_m$
& Power consumption at bus  $m$.
\\\hline
& $E_0$
& Brown energy imported from main grid.
\\\hline
& $v_n$
& Voltage at bus $n$. \\\hline
& $E_0$
& Brown energy injected at bus $0$. \\\hline
& $P_{\textrm{loss}}$
& Total power loss on transmission line. \\\hline
& $\E(\cdot)$
& Expectation over random variable. \\\hline
&$t$       & Index of time slots. \\\hline
&$T$     & Number  of time slots. \\\hline
\hline
\end{tabular} \label{symbol}
\end{table}

The system model of  wireless cellular networks empowered by brown and green energy  is shown in
Fig. \ref{Fignetwork}.
Brown energy is assumed  delivered by the power plant and green energy generated by wind turbines and solar panels are geographically distributed.
Both brown and green energy are transmitted along the microgrid.
The components in this system model are detailed
as follows:

\begin{itemize}
\item
\textbf{Base station:} A base station
 is a centralized device used to
provide wireless communication services to mobile units \cite{AsynNetwork}.
The wireless base station is the power consumption
device.
From the power consumption perspective, the base station has two parts: the communication unit and
the cooling systems.
The communication unit communication with mobile users at certain transmission power and
communication mechanism e.g., 3G, 4G to meet the quality-of-service of mobile users \cite{du2018proactive}.
The cooling subsystem,
including air conditioning and fans, maintains an appropriate
operation temperature at the BS.
Each BS spends more than $50\%$ power on the
cooling system.
The base stations are connected to the micro-grid to obtain stable power supply.

\item
\textbf{Microgrid:}
A microgrid is a localized grouping of electricity/power sources and loads \cite{du2014distributed, wang2016smart}, which
consists of buses and branches, where a bus can represent a generator or a load
substation, and a branch can stand for distribution line.
It connects to the main grid at a bus of common coupling that maintains voltage at the same level
as the main grid,
which provides the microgrid stable power source.
Besides, the increased market penetration of distributed green generation installed, such as
solar, photovoltaic and wind,  has been the advent of an micro-grid.
Voltage difference between different buses pushes current through the branches to supply energy
towards the load bus.
\item
\textbf{Green power source:}
Renewable energy provided from natural resources such as sunlight
and wind is environmentally friendly green energy.
Though clean, no carbon emissions renewable energy is not a panacea.
Renewable energy is less stable and reliable compared to fossil fuel.
Though energy like solar power and wind power is abundant and unlimited, the sun sets or
disappears behind clouds, sometimes the wind stops blowing, causing fluctuations.
Besides, renewable energy exists over wide geographical areas and thus are connected to different
buses.
Thus renewable energy needs to transmit to base stations at different locations.
\end{itemize}

\section{Brown Energy Consumption Minimization: MODELING,
AND FORMULATION
}\label{model}
In the previous section, we have discussed the system architecture for green energy allocation.
In this section, we first provide the metric for our geographically green energy allocation
problem.
We then present
the details for modeling the total brown energy cost, power transmission
constraints, and server service delay constraint as well as
formulating the total brown energy  minimization problem.
Based on detailed discussions, we show that the
constrained optimization problem equivalent to a multiple objective optimization  problem.
The  problem is proved analytically only has Pareto optimal solution.

Considering a general cellular   network consists of $N$ base stations, and  each base stations provides $C_n $ bit/Hz  communication capacity.
Power plants
burns natural gas fuel source
to meet power demand base stations.
In the meantime, green energy, such as solar power and wind power is utilize to abate the dependent on brown energy consumption at the power plant.

Suppose among all power consumption for base stations, $E_0$ kWh power is generated by the brown energy and the carbon dioxide  factor is $\eta$  tons\ce{CO2}/kWh.
Green energy, such as solar power and wind power is utilized to abate the dependent on brown energy.

Though metric like bit/Joule/Hz has been proposed to evaluate the greenness of a communication network \cite{Bhargava}, one can not evaluate how efficiently we utilize green energy.
The emission of \ce{CO2} can be exploited
to distinguish the consumption of brown and green energy.
More specifically, let the $\eta E_0$ be the brown energy injected to the microgrid, the efficiency of utilizing green energy to power the base station networks can  be evaluated by the metric
\begin{equation}\label{metric}
f = \frac{\sum_{n=1}^N C_n}{\eta E_0}.
\end{equation}
The unit of function $f$ is bits/ton\ce{CO2}/Hz.
Next, we model the power flow over micro grid to understand how geographical green energy can be allocated to different base stations.
\subsection{Formulation of Power Flow over Micro-Grid}
The power-flow problem is the computation of voltages and current in a power grid under power line physical constraint.
The power grid consists of buses and branches, where a bus can represent a generator or a load
substation, and a branch stands for a distribution line.
Let $\{0, 1,2,\ldots N\}$ denote the set of bus indexes,
where bus $0$ is the bus of common coupling connected to the main grid directly.
In microgrid, this bus is a slack bus with fixed voltage magnitude denoted by $U$.

Let $\G$ denote the set of $N_g$ green energy sites and $\B$ denote the set of $N_b$ base stations connected to different buses according to their geographical locations.
Two neighboring buses are connected by a transmission line which are represented by a two-port model  with certain admittance as
shown in Fig. \ref{line}.

1) \emph{Brown Energy from Main Grid}:
The amount of power consumption
depends on the type of base station and its communication load.
Let $p_n $  denote the power consumption
by the base station connected to
load bus $n$, and
 $g_m$  denote the energy generation by green energy source connect to generation bus $m$.
In physics, the law of conservation of energy states that the total energy of an isolated system remains constant which is said to be \emph{conserved}.
Therefore, the brown energy needs to be imported from main bus can be expressed as
\begin{equation}
E_0
= \sum_{ n\in \mathcal{B}}p_n
 - \sum_{m\in \mathcal{G}}g_m
 + P_{\textrm{loss}}\footnote{In this paper, we only consider the $E_0>0$ case.},
\end{equation}
where $P_{\textrm{loss}}$ is the total branch losses due to physical heating of the lines by current flow.
For the reason that electricity delived in the micrgrid obeys Kirchhoff's voltage law which states that  the sum of voltages around any closed loop in a
circuit must be zero, the power loss must related to the power generations and consumptions at each bus.
We next investigate the explicit expression of $P_{\textrm{loss}}$.

The admittances of all the branches in the
network are denoted by an admittance matrix $\bm Y$, where the lowercase $y_{mn}$
indicates that matrix element that connects buses $m$ and $n$.
Let $\bm X \triangleq \bm Y^{-1}-\frac{1}{\bm 1^T\bm Y^{-1}\bm 1}\bm Y^{-1}\bm 1\bm 1^T \bm Y^{-1}$, the  power loss $P_{\textrm{loss}}$ of the microgrid is given by \cite{linearPower}
\begin{equation} \label{loss2}
P_{\textrm{loss}}(\{p_n\}_{\forall n\in \B})
= \frac{1}{U^2}\bm p^T\mathcal{R}\{\bm X\}\bm p,
\end{equation}
where
$\bm p =[\bm p^T, -\bm g^T ]^T$ with
$\bm p \triangleq [p_1,...p_{N_b}]^T$ and $\bm g \triangleq [g_1,...g_{N_g}]^T$ \footnote{Only real power is considered in this model and reactive power
 is neutralized by the reactive compensation unit that is installed at base station and renewable energy generator.}.
Due to the fact that elements in the first line and the first column of $\mathcal{R}\{\bm X\}$ are all zeros, $\mathcal{R}\{\bm X\}$ can be denoted by block matrix
$\mathcal{R}\{\bm X\} = \left(\begin{smallmatrix}
0 &   0 & 0\\
0& \bm B & \bm M \\
0 & \bm M^T&\bm G\end{smallmatrix}\right)$, where submatrix $\bm B$, $\bm M$ and $\bm G$ are with compatible dimensions such that (\ref{loss2}) can be equivalently written as
\begin{equation}\label{loss}
 P_{\textrm{loss}}(\{p_n\}_{\forall n\in \B})=\frac{1}{U^2}(\bm g^{T}\bm G\bm g - 2\bm g^T\bm M\bm p + \bm p^T\bm B \bm p).
\end{equation}
From (\ref{loss}), one can notice that the power loss is quadratic function of $p_n$ for all $n\in \B$.

2) \emph{Constraints at Buses and Branches}:
voltage stability played a vital role for the power networks.
For example, in the August 14, 2003, blackout in the
Northeastern U.S.  is due to voltage instability.
In microgrid, voltage regulator which in fact is a transformer tap  is installed for  controlling or supporting voltage.
Denote the voltage at bus $n$ is $v_n$, then it is constrained in the safe range as
$|v_n^{\textrm{min}}|\leq |v_n| \leq v_n^{\textrm{max}} $.
Let $\bm v\triangleq[v_1,\ldots, v_N]^T$
 from (\ref{loss2}) with the  Ohm's law, from (\ref{loss2}) one can  obtain \cite{linearPower}
\begin{equation} \label{voltage2}
|\bm v(\{p_n\}_{\forall n\in \B})|=
 U\bm 1_N  + \frac{1}{U }
[(\bm G \bm g-\bm M\bm p)^T,(\bm N^T\bm g-\bm B\bm p)^T]^T,
\end{equation}
where $\bm 1_N$ is an all $1$ vector with length $N$.

The other constraint need to be considered for microgrid is that base station power consumption at each bus should not exceed the maximum value.
Mathematically, we can denote that
$|p_n^{\textrm{min}}|\leq p_n \leq p_n^{\textrm{max}} $ for $n=1,...N_b$.

Therefore, we formulate the brown energy minimization problem as
\begin{subequations} \label{brown}
\begin{align}
\label{power-1}
\underset{\{p_n\}_{\forall n\in \B}}{\mathrm{min}}
\quad &
\sum_{\forall n\in \mathcal{B}}p_n
 - \sum_{m\in \mathcal{G}}g_m
 +  P_{\textrm{loss}}(\{p_n\}_{\forall n\in \B}) ,\\
\label{Capcaity-2}
\mathrm{s.t.}
\quad &
|p_n^{\textrm{min}}|\leq p_{n} \leq p_n^{\textrm{min}}, \quad \quad \quad \quad \quad \quad  \forall n\in \B \\
\label{subeq4-3}
&
|v_n^{\textrm{min}}|\leq |v_n(\{p_n\}_{\forall n\in \B})| \leq v_n^{\textrm{min}} \quad \forall n\in \B\cup \G.
\end{align}
\end{subequations}

It can be observed from (\ref{brown}),
the objective function is in quadratic form \cite{duJMLR,asilomar} and the constrain domain is convex.

\subsection{Communication Model}
An important metric for characterizing any communication  is the communication capacity which is defined as the maximum amount of
information that can be transmitted as a function
of available bandwidth given a constraint
on transmitted power channel \cite{cai2010cfo}.
For the $n^{\textrm{th}}$ base station, the total energy consumption
$p_n$ includes the
transmitted power  and the rest that is due to other
components such as air conditioning, data processor, and circuits, which can be generally modeled
as a
constant power $p_{c,n} > 0$.
Thus the transmitted power can be denoted by $p_n - p_{c,n}$.
 \cite{Traffic-driven}.

Pertaining to topology constraints \cite{informationmatrix, pairwise} between base station and users, communications can be categorized into uplink and downlink.
An uplink  is  communications used for the transmission of signals from an user to the base station.
As base station only performs received signal processing  for the uplink communication, the corresponding power consumption is assumed incorporated in $p_{c,n}$.
Thus $p_{n}$ is mainly used for downlink transmission.

Next, we investigate the power consumption for downland  transmission.
Without loss of generality, we consider a flat fading model with composite channel fading (including both large-scale and small-scale fading)  from base station $n$ to user $k$ denoted by $h_{n,k}$.
Let $x_{n,k}$ denote the data symbol transmitted from base station $n$, then the received signal $y_{n,k}$ is expressed as
\begin{equation}
y_{n,k} =  (p_n - p_{c,n})x_{n,k}h_{n,k} + w_{n,k},
\end{equation}
where the transmitted power to user $k$ is $\tilde{p}_{n,k}$ joules/symbol,
$h_{n,k}$ denotes the channel fading between base station $n$ and user $k$ and
and $w\sim \N(0,\sigma_n^2)$ is i.i.d. Gaussian noise.

The capacity $C_n $ of a channel provides the performance limit: reliable communication can be attained at any rate $R_k<C_n$;
reliable communication is impossible at rates $R_k>C_n$.
In the multiuser case, the consent capacity region which is the set of all $R_k$ such that all users can reliably communicate at theire own rate $R_k$, respectively.
We have the multi-user data rate bounds \cite{Tse}
\begin{equation} \label{power-capacity}
C_{n}
=\log_2
\big(1 + \frac{(p_n - p_{c,n})|h_{n}|^2}{\sigma^{2}_{n}}\big),\\
\end{equation}
where $|h_{n}| = \arg\min_{h_{n,k}}|h_{n,k}|^2$.
(\ref{power-capacity}) indicates  that increasing the transmission power
$p_n - p_{c,n}$ can increase the transmission capacity.
To achieve the lowest rate, the transmission power is requested to be
\begin{equation} \label{power-capacity}
p_n \geq \underbrace{\frac{(2^{{C}_{n}}-1)\sigma^{2}_{n}}{|h_{n,k}|^2}}_{\underline{p}_n}
+  p_{c,n}.
\end{equation}
Note that information of $\sigma^{2}_{n}$ and $|h_{n}|$ is estimated at user $k$ and feeded back to base station $n$ via control channel.
This feedback communication is a standard process in nowadays wireless communication systems for adaptive power control.
We then have
\begin{subequations}
\label{capacity-network}
\begin{align}
\label{Capcaity-1}
\underset{\{p_n\}_{\forall n\in \B}}{\mathrm{max}}
\quad &
\sum_{n\in \B} C_n,\\
\label{Capcaity-2}
\mathrm{s.t.}
\quad &
p_{n}\geq
\underline{p}_n + p_{c,n},
\quad\quad\quad\quad\quad\quad\quad \forall n\in \B \\
\label{subeq4-3}
&
|p_n^{\textrm{min}}|\leq p_{n} \leq p_n^{\textrm{min}}, \quad\quad\quad\quad\quad\quad \forall n\in \B \\
\label{subeq4-3}
&
|v_n^{\textrm{min}}|\leq |v_n(\{p_n\}_{\forall n\in \B})| \leq v_n^{\textrm{min}},
\quad \forall n\in \B\cup \G
\end{align}
\end{subequations}
It can be simply approved that the above communication capacity optimization is a convex optimization problem.
\subsection{Green Metric for Communication Networks and Pareto Optimal Solvation}
Substituting the geographical power allocation model in (\ref{brown}) and the communication capacity model in (\ref{capacity-network}) in the greenness metric (\ref{metric}) we have
\begin{subequations}
\label{CO2-network}
\begin{align}
\label{CO2-1}
\underset{\{p_n\}_{\forall n\in \B}}{\mathrm{max}}
\quad &
\eta = \frac{\sum_{n=1}^N C_n}
{\eta(\sum_{n\in \B}p_n-\sum_{m\in \G}g_m
+ P_{\textrm{loss}}(\{p_n\}_{\forall n\in \B}) )},\\
\label{CO2-2}
\mathrm{s.t.}
\quad &
p_{n}\geq
\underline{p}_n + p_{c,n},
\quad\quad\quad\quad\quad\quad\quad \forall n\in \B \\
\label{subeq4-3}
&
|p_n^{\textrm{min}}|\leq p_{n} \leq p_n^{\textrm{min}},
\quad\quad\quad\quad\quad\quad \forall n\in \B \\
\label{subeq4-3}
&
|v_n^{\textrm{min}}|\leq |v_n(\{p_n\}_{\forall n\in \B})| \leq v_n^{\textrm{min}},
\quad \forall n\in \B\cup \G
\end{align}
\end{subequations}

The above problem is equivalent to multicriteria optimization problem
and can be formulated as follows
\begin{subequations}
\label{Muti-Obj}
\begin{align}
\label{Muti-1}
\textrm{Obj 1}: \underset{\{p_n\}_{\forall n\in \B}}{\mathrm{max}}
\quad &
C = \sum_{n=1}^N
\log_2
\big(1 + \frac{p_{n}|h_{n}|^2}{\sigma^{2}_{n}}\big),\\
\label{Muti-2}
\textrm{Obj 2}: \underset{\{p_n\}_{\forall n\in \B}}{\mathrm{min}}
\quad &
E_0 = \sum_{n\in \B}p_n-\sum_{m\in \G}g_m+ P_{\textrm{loss}}(\{p_n\}_{\forall n\in \B}) ,\\
\label{Muti-3}
\mathrm{s.t.}
\quad &
p_{n}\geq
\underline{p}_n + p_{c,n},
\quad\quad \forall n\in \B \\
\label{subeq4-3}
&
|p_n^{\textrm{min}}|\leq p_{n} \leq p_n^{\textrm{min}}, \quad \forall n\in \B \\
\label{subeq4-3}
&
|v_n^{\textrm{min}}|\leq |v_n| \leq v_n^{\textrm{min}}.
\quad \forall n\in \B\cup \G
\end{align}
\end{subequations}
It can be easily noticed that these two objective functions are
conflicting objectives.
For example, when the total brown power is
minimized (i.e., $\bm p_n =0$), the communication capacity is also $0$ and is not maximized.
So there does not appear to exist an optimal solution in our problem that optimizes both objectives simultaneously.
Then we can only obtain the Pareto optimal solution when
investigating the multiple objectives optimization problem.

Instead of solving (\ref{Muti-Obj}) directly, let  consider a simpler
single objective optimization problem for a given $C_0$,
\begin{subequations}
\label{Muti-Obj2}
\begin{align}
\label{Muti-2}
\underset{\{p_n\}_{\forall n\in \B}}{\mathrm{min}}
\quad &
E_0(\{{p}_n\}_{\forall n\in \B}) \\\nonumber
=& \sum_{n\in \B}p_n-\sum_{m\in \G}g_m+ P_{\textrm{loss}}(\{p_n\}_{\forall n\in \B}) ,\\
\label{Muti-3}
\mathrm{s.t.}
\quad
& \sum_{n=1}^N
\log_2
\big(1 + \frac{p_{n}|h_{n}|^2}{\sigma^{2}_{n}}\big)\geq C_0,\\
&
p_{n}\geq
\underline{p}_n + p_{c,n},
\quad\quad\quad\quad\quad\quad\quad \forall n\in \B \\
\label{subeq4-3}
&
|p_n^{\textrm{min}}|\leq p_{n} \leq p_n^{\textrm{min}},
\quad\quad\quad\quad \quad\quad \forall n\in \B \\
\label{subeq4-3}
&
|v_n^{\textrm{min}}|\leq
|v_n(\{p_n\}_{\forall n\in \B})| \leq v_n^{\textrm{min}},
\quad \forall n\in \B\cup \G
\end{align}
\end{subequations}
which is named as \textbf{one shot optimization}.
We  show  in the Appendix that
if $E_0=
E_0^{\ast} $ is an optimal solution  for a given value $C_0 = C^{\ast}$ in (\ref{Muti-Obj2}), then $(C^{\ast}, E_0^{\ast})$ is a Pareto optimal solution to (\ref{Muti-Obj}).

\begin{mylemma}
Let $E_0^{\ast}$ be an optimal solution to (\ref{Muti-Obj2}) for a
given value of $C_0=C^{\ast}$, then ($E_0^{\ast}$, $C^{\ast}$) is a Pareto optimal solution to (\ref{Muti-Obj}).
\end{mylemma}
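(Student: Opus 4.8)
The plan is to recognize (\ref{Muti-Obj2}) as the $\varepsilon$-constraint scalarization of the bi-objective program (\ref{Muti-Obj}) --- the capacity objective (\ref{Muti-1}) is moved into the feasible set at the threshold $C_0=C^\ast$ --- and then to argue by contradiction. Recall that a feasible allocation $\bm p^\ast$, with associated values $E_0^\ast=E_0(\bm p^\ast)$ and $C^\ast=C(\bm p^\ast)$, is \emph{Pareto optimal} for (\ref{Muti-Obj}) when no other feasible $\bm p$ satisfies $C(\bm p)\ge C^\ast$ and $E_0(\bm p)\le E_0^\ast$ with at least one inequality strict. First I would fix an optimizer $\bm p^\ast$ of (\ref{Muti-Obj2}); it is automatically feasible for (\ref{Muti-Obj}), because the power-box constraint (\ref{subeq4-3}) and the voltage constraint are shared by both problems, while (\ref{Muti-3}) gives $C(\bm p^\ast)\ge C_0=C^\ast$.

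Suppose, for contradiction, that $(E_0^\ast,C^\ast)$ is not Pareto optimal, so some feasible $\bm p'$ of (\ref{Muti-Obj}) dominates it. If $E_0(\bm p')<E_0^\ast$ and $C(\bm p')\ge C^\ast$, then $\bm p'$ satisfies (\ref{Muti-3}) and every remaining constraint of (\ref{Muti-Obj2}), hence is feasible for the one-shot problem while attaining a strictly smaller brown-energy value than $E_0^\ast$ --- contradicting optimality of $E_0^\ast$. The remaining possibility is $E_0(\bm p')=E_0^\ast$ with $C(\bm p')>C^\ast$; here I would use that each summand of the capacity in (\ref{Muti-1}) is strictly increasing in $p_n$, so $\bm p'$ has strict slack in (\ref{Muti-3}) and can be perturbed to a slightly smaller consumption vector that still meets $C(\cdot)\ge C_0$, together with strict convexity of the quadratic functional $E_0(\{p_n\})$ in (\ref{loss}) --- equivalently, uniqueness of the minimizer of (\ref{Muti-Obj2}) --- to conclude $E_0(\bm p')>E_0^\ast$, again a contradiction. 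Hence no dominating $\bm p'$ exists and $(E_0^\ast,C^\ast)$ is Pareto optimal.

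The feasibility bookkeeping and the first case are routine; I expect the main obstacle to be the second case, i.e.\ excluding a point that ties $E_0^\ast$ but strictly improves the capacity. This is where a structural assumption enters: one needs the base-station block $\bm B$ of $\mathcal{R}\{\bm X\}$ to be positive definite, so that $E_0(\cdot)$ is a strictly convex quadratic over the convex feasible set and its minimizer is unique; absent that, the result should be read as \emph{weak} Pareto optimality, for which only the first case is required. A related subtlety worth flagging in the write-up is that (\ref{Muti-3}) should be active at $\bm p^\ast$, so that $C(\bm p^\ast)=C^\ast$ rather than merely $C(\bm p^\ast)\ge C^\ast$; if it is slack, then $\bm p^\ast$ itself dominates $(E_0^\ast,C^\ast)$, and one should instead identify the Pareto point as $(E_0^\ast,C(\bm p^\ast))$. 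Everything else follows directly from the $\varepsilon$-constraint construction.
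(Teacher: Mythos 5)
Your argument is the same $\varepsilon$-constraint-plus-contradiction strategy as the paper's appendix, and your first case is exactly the paper's proof: a dominating point with strictly smaller $E_0$ would itself be feasible for (\ref{Muti-Obj2}) at level $C^{\ast}$ (the paper phrases the feasibility step as ``keep the power unchanged and decrease $C$ to $C^{\ast}$ through scheduling''), contradicting optimality of $E_0^{\ast}$. Where you go beyond the paper is instructive: the appendix takes the negation of Pareto optimality to be the existence of a point with \emph{both} $C>C^{\ast}$ and $E_0<E_0^{\ast}$, so under the standard definition (dominance with at least one strict inequality) it only establishes \emph{weak} Pareto optimality and silently skips your second case, the tie $E_0(\bm p')=E_0^{\ast}$ with $C(\bm p')>C^{\ast}$. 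Your handling of that case is the right idea but is the least rigorous part of your write-up: the perturbation that trades capacity slack for lower $E_0$ needs (i) that the descent direction does not violate the lower bounds $p_n\geq \underline{p}_n+p_{c,n}$ and $p_n\geq p_n^{\textrm{min}}$, and (ii) that decreasing $p_n$ actually decreases $E_0$, which holds only when the loss gradient $\tfrac{2}{U^2}(\bm B\bm p-\bm M^T\bm g)$ does not overwhelm the unit linear term; alternatively positive definiteness of $\bm B$ gives uniqueness of the minimizer and the standard sufficient condition for the $\varepsilon$-constraint method to yield (strong) Pareto points. You correctly flag both the needed assumption and the fallback to weak Pareto optimality, and you also catch a subtlety the paper ignores, namely that (\ref{Muti-3}) must be active at the optimum for $(E_0^{\ast},C^{\ast})$ to be the right point to certify. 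In short: same route, but your version is the more careful one, and the caveats you raise are genuine gaps in the paper's own argument rather than in yours.
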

\begin{proof}
See Appendix.
\end{proof}

It can be easily shown  that (\ref{Muti-Obj2}) is a convex optimization problem and therefore can be solved efficiently and globally using the interior-point methods.
However, in practice the power of green energy $g_n$ is random and the measurements of $g_n$ exists errors and lead to error in power allocation.
In the next section, a convergence guaranteed  stochastic renewable energy allocation will be discussed.

\section{Online Stochastic  Power Allocation  }\label{algorithm}
\subsection{Average Brown Energy Consumption Optimization }
In microgrid, Algorithm 1 is implemented in a central control unit for power allocation via gathering the green energy $g_m$ and other parameters in the micrograms and cellular networks.
However, real-time exact value of $g_m$ is difficult to be obtained
due to measurements errors and communication delays from distributed green energy sites to the central control unit.
Rather than implementing the unreliable and possibly obsolete instantaneous decision of $p_n$ of (\ref{Muti-Obj2}), we consider
a robust  stochastic control scheme which is to minimize the average brown energy consumption as
\begin{subequations}
\label{stochastic}
\begin{align}
\label{Muti-2}
&\underset{\{p_n(t)\}_{\forall n\in \B}}{\mathrm{min}} f(\{g_n(t)\}_{\forall n\in \B})
\\\nonumber
&=
\E\big\{ \sum_{n\in \B}p_n(t)-\sum_{m\in \G}g_m(t)
+ P_{\textrm{loss}}(\{p_n(t)\}_{\forall n\in \B}) \big\} ,\\
\label{Muti-3}
\mathrm{s.t.}
\quad
& \sum_{n=1}^N
\log_2
\big(1 + \frac{p_{n}(t)|h_{n}|^2}{\sigma^{2}_{n}}\big)\geq C_0,\\
&
p_{n}\geq
\underline{p}_n(t) + p_{c,n}(t),
\quad\quad\quad\quad \forall n\in \B \\
\label{subeq4-3}
&
|p_n^{\textrm{min}}|\leq p_{n}(t) \leq p_n^{\textrm{min}}(t),
\quad\quad\quad \forall n\in \B \\
\label{subeq4-3}
&
|v_n^{\textrm{min}}|\leq
|v_n(\{p_n(t)\}_{\forall n\in \B})| \leq v_n^{\textrm{min}},
\forall n\in \B\cup \G
\end{align}
\end{subequations}
In the subsequent, we use $p_n(t)$ and $g_m(t)$ to represent the time varying  power consumption and green power generation of base station $n$ and green energy site $m$.
\subsection{Stochastic Online Power Allocation Algorithm Design and Analysis}
Leveraging recent advances in online stochastic convex optimization, the above problem (\ref{stochastic}) can be solved under stochastic approximation framework.
In this paper, we use Bregmen projection based mirror decent algorithm to iterate the objective
variable $\bm p(t)$ that gradually converges to the optimization point of the expectation
functions in (\ref{stochastic}).

Let $f^{\ast}(\bm y)\triangleq \sup_{\bm x\in \textrm{dom}f}(\bm x^T\bm y -f(\bm x))$ denote the
conjugate function of $f(\bm x)$,
and $F(\cdot)$ denotes a continuous differentiable function that is $\alpha$-strongly
convex w.r.t. Euclid norm $\Vert \cdot \Vert$.
The Bregmen divergence associated with $F(\cdot)$ is
defined as \cite{OnlineOpt}
\begin{equation}\label{Bregmen}
B_{F}(\bm x, \bm y)
\triangleq
F(\bm x)-F(\bm y)- (\bm x - \bm y)^T  \triangledown F(\bm y) ,
\end{equation}
where $B_{F}$ satisfies  $B_{F}(\bm x, \bm y)\geq \frac{\alpha}{2}\Vert  {\bm x-\bm y} \Vert^{2}$ for some $\alpha>0$.
The online mirror decent method is described as the following two steps iterations:
First, the gradient is performed in the dual space rather than in the primal space as
\begin{equation}
\label{iteration1}
\omega_{\bm p(t+1)} =\triangledown F^{*}(\triangledown F(\bm p(t))-\delta \triangledown
f(\bm p(t), \bm p_g)),
\end{equation}
where $F^{*}(\cdot)$ denotes the dual function of $F(\cdot)$.
and $\delta $ is the step size.
The second step is the projection step defined by the Bregman divergence associated to $F(\cdot)$ as
\begin{equation}
\label{iteration2}
\bm{p}(t+1) =\arg \min_{\bm x\in \mathcal{A}} B_{F}(\bm x,\omega_{\bm p}(t+1)),
\end{equation}
where $\mathcal{A} $ is the feasible domain of $\bm x$.
Intuitively, the above mirror decent algorithm minimizes a first-order approximation of the function $f(\cdot)$ at the current iterate $\omega_{\bm p(t)}$ while forcing the next iterate $\omega_{\bm p(t+1)}$
to lie close to $\omega_{\bm p(t)}$.
The step size $\delta$ controls the trade-off between these two.

To obtain closed form solution in the iteration, we use standard norm as the Bregmen divergence function i.e., $F(\bm x)=\frac{1}{2}\Vert\bm x
\Vert_{2}^{2}$.
Substituting the expression into  (\ref{Bregmen}) and after simple calculation,  we obtain
$B_{F}(\bm x, \bm y)= \frac{1}{2}\Vert \bm  {x-y} \Vert^{2}$.
By Holder's inequality, and a simple optimization of a quadratic polynomial, one has
$F^{*}(\bm y) = \sup_{x\in \mathcal{D}}(\bm x^T\bm y - \frac{1}{2}\Vert\bm x\Vert^2)\leq
\sup_{x\in \mathcal{D}}
(\Vert\bm x \Vert \Vert\bm y\Vert-\frac{1}{2}\Vert\bm x \Vert^2)=\frac{1}{2}
\Vert  {\bm x-\bm y} \Vert^{2}$.
Notice that the inequality above is in fact an equality by definition of the dual norm.
Thus $F^{*}(\bm y) = \bm y$.
Moreover, as
$\triangledown F(\bm x) = \bm x $,
we have
$\triangledown F(\bm x)^{\ast} = \bm x $.

Then the update with the gradient in (\ref{iteration1}) can be computed as
\begin{equation}
\label{iteration1-1}
\omega_{\bm p(t+1)} =\bm p(t)-\eta \triangledown f(\bm p(t), \bm g).
\end{equation}
where
\begin{equation}\label{gre}
\triangledown f(\bm p(t), \bm g) =
-2\bm g^T\bm M + 2\bm p(t)^T\bm B.
\end{equation}
Then the update with the projection step in (\ref{iteration2}) is derived as
\begin{subequations}
\label{iteration2-1}
\begin{align}
\label{Muti-2}
\bm{p}(t+1) =& \arg \min \frac{1}{2}\Vert \bm x - \omega_{\bm
p}(t+1)\Vert^2 ,\\
\label{Muti-3}
\mathrm{s.t.}
\quad
& \sum_{n=1}^N
\log_2
\big(1 + \frac{p_{n}|h_{n}|^2}{\sigma^{2}_{n}}\big)\geq C_0,\\
&
p_{n}\geq
\underline{p}_n + p_{c,n},
\quad\quad  \forall n\in \B \\
\label{subeq4-3}
&
|p_n^{\textrm{min}}|\leq p_{n} \leq p_n^{\textrm{min}},
\quad\quad \forall n\in \B \\
\label{subeq4-3}
&
|v_n^{\textrm{min}}|\leq
|v_n(\{p_n\}_{\forall n\in \B})|
 \leq v_n^{\textrm{min}}.
\quad \forall n\in \B\cup \G
\end{align}
\end{subequations}

To summarize, the stochastic online power allocation can be solved by an iterative algorithm with two steps in each iteration.
The first step can be easily computed using the closed form of the
gradient as in (\ref{iteration1-1})
The second step is to compute (\ref{iteration2-1}).
Obviously, (\ref{iteration2-1}) is a simple convex optimization problem,
 which can be solved by interior point method.
The stochastic online power allocation  algorithm is summarized in Algorithm 1.

Notice that the proposed online power allocation algorithm does not depend on any measurement error or communication delay distribution assumption. It rather utilizes real-time communication and power data to infer the unknown statistics.
Next we evaluate the performance of the online power allocation algorithm.
\begin{algorithm}[t]
\caption{Stochastic Online Power Allocation}
\begin{algorithmic}[1]
\STATE  {Initialize: }
  {$\bm p(1)$ is computed by one shot optimization in (\ref{Muti-Obj2})}.
\FOR{  $t \in\{1,2^, \ldots, T\}$
}
    \STATE  Compute the gradient $\omega_{\bm p(t+1)}$ according to (\ref{iteration1-1}).
    \STATE Update the gradient according to (\ref{gre}).

\STATE    Compute $p(t+1)$ via minimizing $B_{F}(\bm x,\omega_{\bm p}(t+1))$  according to (\ref{iteration2-1}).
\ENDFOR
\end{algorithmic}
\end{algorithm}

\begin{myproperty} \label{P-Cov}
Let $\{\tilde{p}_n(t)\}_{\forall n\in \B}$ and $\{\hat{p}_n(t)\}_{\forall n\in \B}$ are the minimizers of (\ref{Muti-Obj2}) and (\ref{stochastic}) respectively, it holds that \begin{equation}
E_0(\{{\tilde{p}}_n(t)\}_{\forall n\in \B}) -
f(\{\hat{p}_n(t)\}_{\forall n\in \B})
\leq \frac{k}{\sqrt{T}},
\end{equation}
where $k$ is a constant irrespective of $T$.
 \end{myproperty}

Proposition 1 guarantees that the expected brown power consumption converges to the optimum stochastic solution at the rate of $\mathcal{O}(1/\sqrt{T})$.
Therefore the online power allocation algorithm
in Table 1 has sublinear accurate error.

\section{PERFORMANCE EVALUATION}\label{sec-simu}
In this section, we present and discuss simulation results, in
comparison with numerical results pertaining to the previously
developed analysis.

\subsection{Parameter Setting}
The $37$ bus test feeder model \cite{IEEE_PES_Test_Feeders} is used for the simulation of micogrid.
It represents an actual underground radial distribution feeder in California.
The schematic view of this network can
be seen in Fig. \ref{1234}.
In this $37$-bus network,
bus $799$ is the bus of common coupling.
Stable brown energy is injected to the microgrid when green energy is not enough.
Green energy sites are geographically distributed at buses $742$, $725$, $735$ $731$ and $741$.
Each other buses is load bus and is linked to a base station.
The maximum power rate at each bus is set as $1.3$MW.
Solar power generation
sites of $12$MW are placed respectively on bus $742$, $725$ and $735$ and the wind power generation sites of $8$MW are placed respectively on bus $731$ and $741$.
The green energy data set are from Smart${^{\ast}}$ Microgrid Data Set \cite{dataset}.
Besides, we generate the data set for base station power consumption
by adopting Huawei DBS3900 base station power consumption parameters in the simultion \cite{Huawei}.
User traffic data are loaded according to \cite{userpattern} for the simulation and the total user number variation is given in Fig. \ref{no}.
In general while users increases, the power consumption increases.

For each operation interval, the microgrid control center collects green energy generation $g_m $ from generation  buses.
It is assumed that the renewable generation data is observed with a white Gaussian
noise of $20\%$ of its actual value due to measurement error.
Besides, the communication channel state information
$h_n $ for all $n\in \B$ and noise variance $\sigma_n^2$ are obtained in the communication networks and send to the microgrid control center for implement the proposed online stochastic power allocation algorithm.

\subsection{Experimental Results}

While the user traffic increases more power is needed to support base station for information transmission.
As the transmission power increases, the communication capacity $\sum_{n\in\B}C_n$ will increase.
Assume  frequency reuse is adopted and therefore there is no  inter-cell interference beween neighboring cells.
To investigate the advantage of green energy  for the communication system, base stations communication power at each base station is normalized by $\frac{1}{N_b}(P_{\textrm{loss}}+ E_0)$.
Fig. 5 and Fig. 6
show the capacities variation of base stations at bus $720$ and $730$, respectivly.
It is clear the proposed online stochastic power allocation algorithm provides larger communication capacity then the one shot solution.
Thus it  utilizes the green energy more efficiently and therefore abates the dependence on brown energy from the main grid.
Besides, it can be observed that online stochastic algorithm provides more stable capacity then one shot algorithm.

Power loss on the microgrid is   due to  physical heating  on the transmission branches caused by electricity delivering.
Though it is unavoidable, it is preferred to lower the power loss as much as possible.
Distributed green energy also decrease the power loss on the microgrid.
We evaluate the total power loss for
the proposed stochastic online power allocation algorithm and one shot optimization algorithm.
Besides,
the power loss is also simulated according to (\ref{loss})
 in the case that there is no green energy distributed in the microgrid and all  energy is brown energy imported from the main grid.
Fig. 7 shows the total power loss cost
reductions by comparing results of the proposed online stochastic power allocation algorithm (Algorithm 1)
and the one shot solution  (\ref{Muti-Obj2}).
It shows that while geographically green energy incorporated, powr flow is more efficiently as the power source is not far from the consumption bus.
In contrast, with only brown energy from main grid, the power loss increases greatly.

\begin{figure}[t]
  \centering
{\epsfig{figure=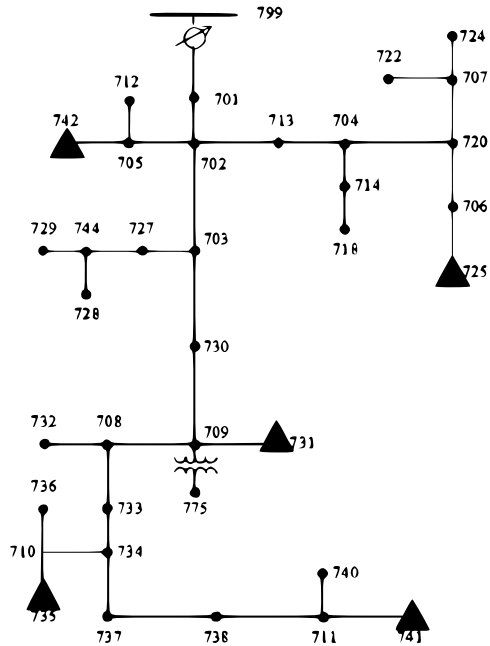,width=2.5in}}  \caption{Schematic diagram of the IEEE 37-bus test feeder cases \cite{IEEE_PES_Test_Feeders}.
Bus $799$ is the bus of common coupling that links the microgrid to main power grid.
Solar power harvesting sites are located at buses $742$, $725$, $735$ and wind power harvesting sites are located at buses  $731$ and $741$. These green energy sites are denoted by green triangles.
Other buses  are load buses connected to base stations.
}
\label{37test}
\end{figure}

\begin{figure}[t]
  \centering
{\epsfig{figure=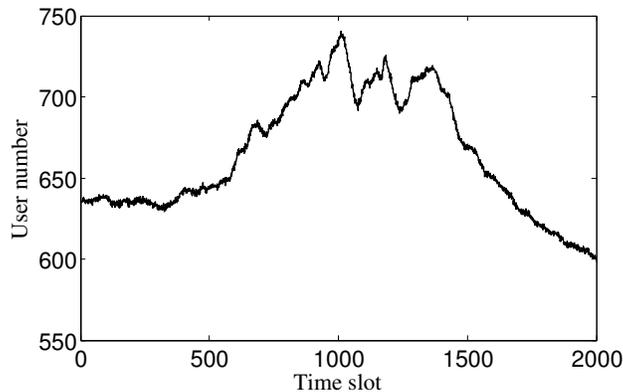,width=3.2in}}
\caption{Dynamic user number versus time slot. }
\label{no}
\end{figure}

\begin{figure}[t]
  \centering
{\epsfig{figure=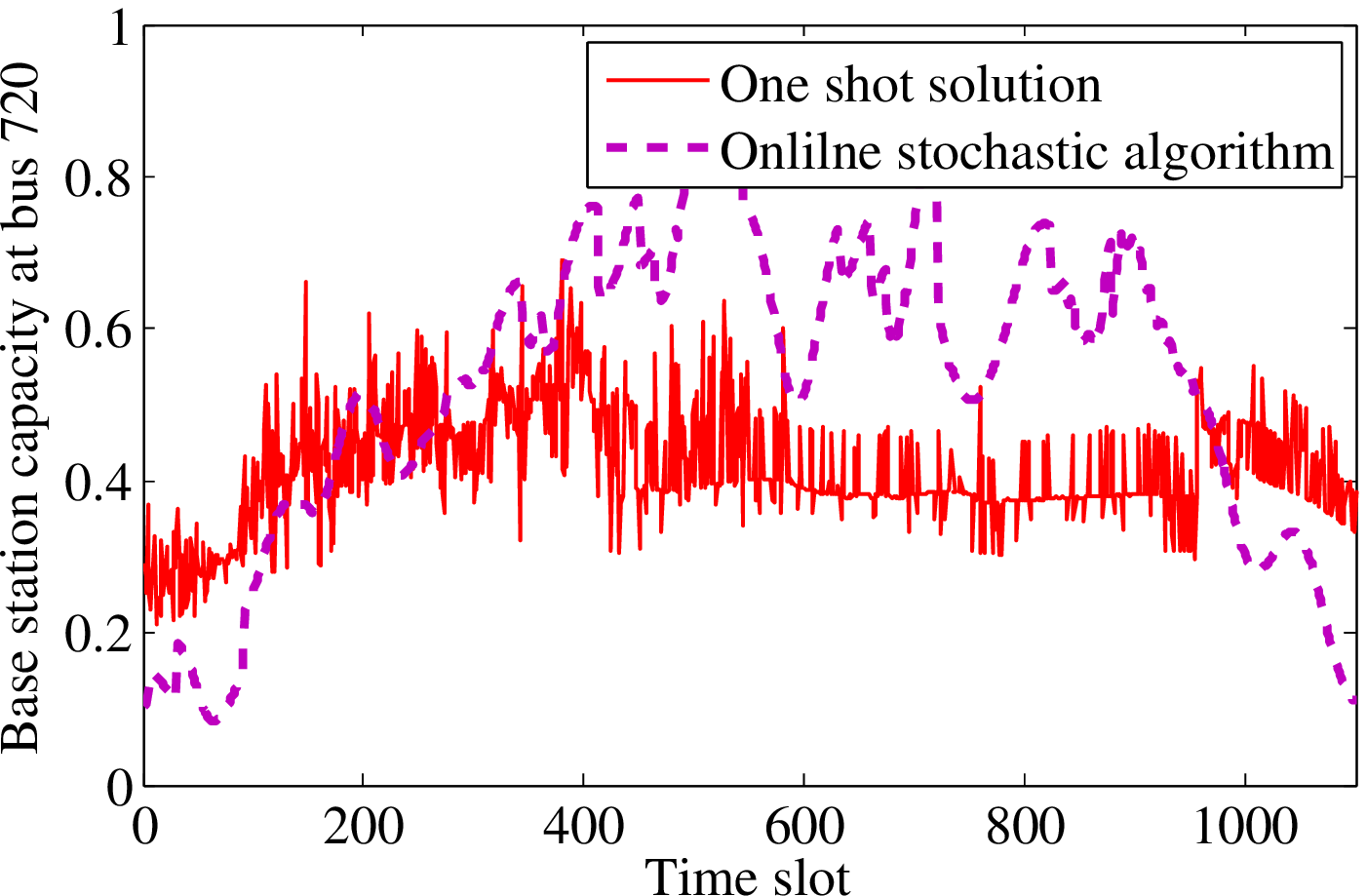,width=3.0in}}
\caption{Normalized communication capacities of base station empowered by bus $720$ versus time slot.}
\label{MSE-Iter-PDR}
\end{figure}

\begin{figure}[t]
  \centering
{\epsfig{figure=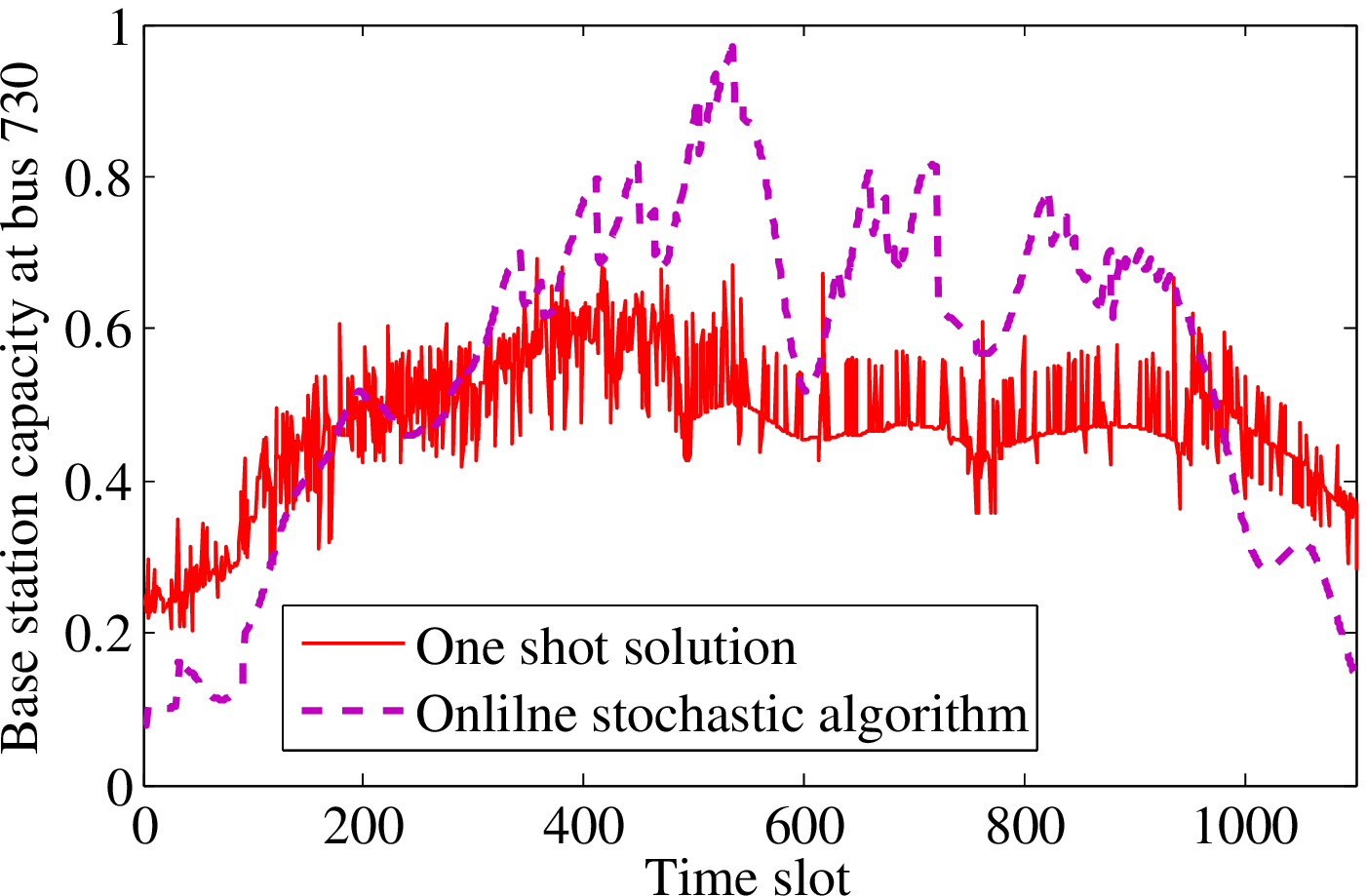,width=3.0in}}
\caption{Normalized communication capacities of base station empowered by bus $730$ versus time slot. }
\label{MSE-dynamic}
\end{figure}

\begin{figure}[t]
  \centering
{\epsfig{figure=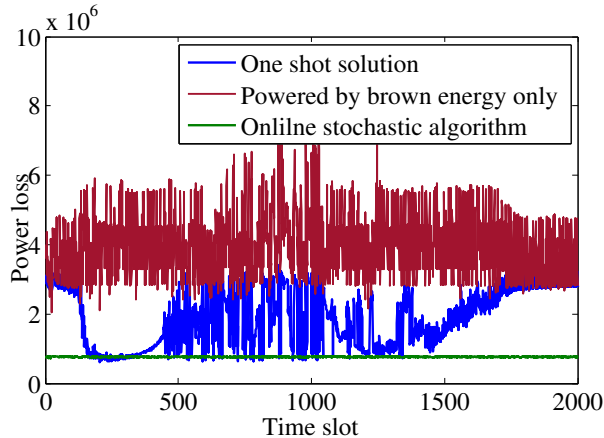,width=3.1in}}
\caption{Power loss versus time slot.  }
\label{1234}
\end{figure}

\section{Conclusions}\label{conclusion}
In this paper, we have investigated optimal green energy allocation to abate the dependence on brown energy without sacrificing communication quality.
Leveraging the fact that the green energy power amount generated from distributed geographical locations varies, the framework of green energy cooperation to empower cellular networks has been achieved  by formulate the problem as  maximizing data rate per ton \ce{CO2} consumed.
Electricity transmission's physical laws constraints i.e., Ohm's law and Kirchhoff's circuit laws, are considered for the power allocation.
The exact real-time amount of green energy is difficult to obtain due to
the noise contaminated measurement and possible communication delay from energy harvesting site to control unite, and therefore
degrades the power allocation performance.
We therefore have proposed stochastic online power allocation algorithm  which is robust to the green energy uncertainty.
It is shown that the online algorithm converges to the optimum stochastic data rate per ton \ce{CO2}.
Finally, we have conducted  data trace-driven
experiments.
The results show that
geographically distributed green energy complement each other by improving the communication capacity while saving brown energy consumption from main grid.
We also come to some key findings
such as with penetration of green energy, power loss on the transmission breaches can be reduced.

\appendix
The proof is based on contradiction. Suppose ($E_0^{\ast}$,
$C^{\ast})$ is not a Pareto optimal solution to (\ref{Muti-Obj}). Then there must
exist a feasible solution ($E_0$, $C$) to (\ref{Muti-Obj}) such that $C > C^{\ast}$
and $E_0 < E_0^{\ast}$. Based on this feasible solution ($E_0$,  $C$), we can
construct another feasible solution to (\ref{Muti-Obj2}) as follows. We
keep power consumption unchanged, but decrease $C$ to $C^{\ast}$
through scheduling or interference management. Clearly, ($ E_0,
C^{\ast}$) is also a feasible solution. So corresponding to the same
$C^{\ast}$ value, we have two feasible solution ($E_0, C^{\ast}$) and ($E_0^{\ast},
C^{\ast}$) and that $E_0 < E_0^{\ast}$. This contradicts to the fact that $E_0^{\ast}$ is
an optimal solution to (\ref{Muti-Obj2}) under a given $C^{\ast}$ value. This
completes our proof.


\end{document}